\newcommand{\ex}{\mathbb{E}}
\newcommand{\F}{\bar{F}}
\newcommand{\m}{\mathrm{m}}
\newcommand{\h}{\mathrm{h}}
\newcommand{\e}{\mathrm{\ell}}
\newcommand{\g}{\mathrm{g}}
\newcommand{\du}{\mathop{\mathrm{d}u}}
\newcommand{\po}{\uppi}
\renewcommand{\F}{\bar{F}}
\newcommand{\G}{\bar{G}}
\renewcommand{\(}{\left(}
\newcommand{\mrl}{\le_{\text{mrl}}}
\newcommand{\hr}{\le_{\text{hr}}}
\newcommand{\st}{\le_{\text{st}}}
\newcommand{\q}{\mathbf{q}}
\renewcommand{\)}{\right)}
\newcommand{\cx}{\le_{\text{cx}}}
\newcommand{\disp}{\le_{\text{disp}}}
\newcommand{\ew}{\le_{\text{ew}}}
\newcommand{\var}{\operatorname{Var}}
\renewcommand{\)}{\right)}
\newcommand{\fixed@sra}{$\vrule height 2\fontdimen22\textfont2 width 0pt\implies$}
\Crefname{paragraph}{Paragraph}{Paragraphs}
\Crefname{subsection}{Subsection}{Subsections}
\begin{document}

\title*{Comparative Statics via Stochastic Orderings in a Two-Echelon Market with Upstream Demand Uncertainty}
\titlerunning{Comparative Statics via Stochastic Orderings}
\author{Constandina Koki and Stefanos Leonardos and Costis Melolidakis}
\authorrunning{C. Koki and S. Leonardos and C. Melolidakis} 
\institute{Constandina Koki \at Athens University of Economics and Business, Department of Statistics, Patision 76, 104 34 Athens, Greece \email{kokiconst@aueb.gr}
\and Stefanos Leonardos and Costis Melolidakis \at National \& Kapodistrian University of Athens, Department of Mathematics, Panepistimioupolis, 157 72 Athens, Greece  \email{sleonardos@math.uoa.gr,cmelol@math.uoa.gr}}
\maketitle

\abstract*{We revisit the classic Cournot model and extend it to a two-echelon supply chain with an upstream supplier who operates under demand uncertainty and multiple downstream retailers who compete over quantity. The supplier's belief about retail demand is modeled via a continuous probability distribution function $F$. If $F$ has the \emph{decreasing generalized mean residual life (DGMRL)} property, then the supplier's optimal pricing policy exists and is the unique fixed point of the \emph{mean residual life (MRL)} function. This closed form representation of the supplier's equilibrium strategy facilitates a transparent comparative statics and sensitivity analysis. We utilize the theory of stochastic orderings to study the response of the equilibrium fundamentals -- wholesale price, retail price and quantity -- to varying demand distribution parameters. We examine supply chain performance, in terms of the distribution of profits, supply chain efficiency, in terms of the \emph{Price of Anarchy}, and complement our findings with numerical results.}

\abstract{We revisit the classic Cournot model and extend it to a two-echelon supply chain with an upstream supplier who operates under demand uncertainty and multiple downstream retailers who compete over quantity. The supplier's belief about retail demand is modeled via a continuous probability distribution function $F$. If $F$ has the \emph{decreasing generalized mean residual life (DGMRL)} property, then the supplier's optimal pricing policy exists and is the unique fixed point of the \emph{mean residual life (MRL)} function. This closed form representation of the supplier's equilibrium strategy facilitates a transparent comparative statics and sensitivity analysis. We utilize the theory of stochastic orderings to study the response of the equilibrium fundamentals -- wholesale price, retail price and quantity -- to varying demand distribution parameters. We examine supply chain performance, in terms of the distribution of profits, supply chain efficiency, in terms of the \emph{Price of Anarchy}, and complement our findings with numerical results.}
\vspace{0.2cm}\noindent
\textbf{Keywords:} Continuous Distributions, Demand Uncertainty, Generalized Mean Residual Life, Comparative Statics, Sensitivity Analysis, Stochastic Orders.

\section{Introduction}
The global character of modern markets necessitates the study of competition models that capture two features: first, that retailers' cost is not constant but rather formed as the decision variable of a strategic, profit-maximizing supplier, and second that uncertainty about retail demand affects not only the retailers but also the supplier. Motivated by these considerations, in \cite{Le17}, we use a game-theoretic approach to extend the classic Cournot market in the following two-stage game: in the first-stage (acting as a Stackelberg leader), a revenue-maximizing supplier sets the wholesale price of a product under incomplete information about market demand. Demand or equivalently, the supplier's belief about it, is modeled via a continuous probability distribution. In the second-stage, the competing retailers observe wholesale price and realized market demand and engage in a classic Cournot competition. Retail price is determined by an affine inverse demand function. \par
\cite{La01} studied a similar model in which demand uncertainty affected a single retailer. They identified the property of \emph{increasing generalized failure rate (IGFR)} as a mild unimodality condition for the \emph{deterministic} supplier's revenue-function and then performed an extensive comparative statics and performance (efficiency) analysis of the supply chain at equilibrium. The properties of IGFR random variables were studied in a series of subsequent notes, \cite{La06},\cite{Ba13} and \cite{Pa05}. \par
In \cite{Le17}, we extended the work of \cite{La01} by moving uncertainty to the supplier and by implementing an arbitrary number of second-stage retailers. We introduced the \emph{generalized mean residual life (GMRL)} function of the supplier's belief distribution $F$ and showed that his \emph{stochastic} revenue function is unimodal, if the GMRL function is decreasing -- \emph{(DGMRL)} property -- and $F$ has finite second moment. In this case, we characterized the supplier's optimal price as a fixed point of his \emph{mean residual life (MRL)} function, see \Cref{mainresult} below. Subsequently, we turned our attention to DGMRL random variables, examined their moments, limiting behavior, closure properties and established their relation to IGFR random variables, as in \cite{La06}, \cite{Ba13} and \cite{Pa05}. This study was done in expense of a comparative statistics and performance analysis, as the one in Sections 3-5 of \cite{La01}. The importance of such an analyis is underlined among others in \cite{Ac13},\cite{At02},\cite{Je18} and references therein.
\subsection{Contributions -- Outline:} The present paper aims to fill this gap. Following the methodology of \cite{La01}, we study the response of maket fundamentals by utilizing the closed form characterization of the equilibrium obtained in \cite{Le17}. Specifically, under the conditions of \Cref{mainresult}, the optimal wholesale price is the unique fixed point of the MRL function of the demand distribution $F$. This motivates the study of conditions under which two different markets, denoted by $F_1$ and $F_2$, can be ordered in the mrl-stochastic order, see \cite{Sh07}. \par
The paper is organized as follows. In \Cref{model}, we provide the model description and in \Cref{existing}, the existing results from \cite{Le17} on which the current analysis is based. Our findings, both analytical and numerical are presented in \Cref{comparative}. \Cref{conclusions} concludes our analysis and discusses directions for future work.

\subsubsection{Comparison to Related Works}
Two-echelon markets have been extensively studied in the literature under different perspectives and various levels of demand uncertainty, see e.g. \cite{Be05}, \cite{Pa09}, \cite{Wu12} and \cite{Ya06}. In the present study, we depart from previous works by introducing the toolbox of stochastic orderings in the comparative statics analysis. The advantage of this approach is that we quantify economic notions, such as market size and demand variability, in various ways. Accordingly, we are able to challenge established economic intuitions by showing, for instance, that repsonses of wholesale prices to increasing market size or demand variability are not easy to perdict, since they largely depend on the notion of variability that is employed.

\section{The Model: Game-Theoretic Formulation}\label{model}
An upstream supplier produces a single homogeneous good at constant marginal cost, normalized to $0$, and sells it to a set of $N=\{1,2,\dots, n\}$ downstream retailers. The supplier has ample quantity to cover any possible demand and his only decision variable is the wholesale price $r$ which he determines prior to and independently of the retailers' order-decisions. The retailers observe $r$ -- a price-only contract (there is no return option and the salvage value of the product is zero) -- as well as the market demand parameter $\alpha$ and choose simultaneously and independently their order-quantities $q_i\(r\mid \alpha\), i\in N$. They face no uncertainty about the demand and the quantity that they order from the supplier is equal to the quantity that they sell to the market (at equilibrium). The retail price is determined by an affine inverse demand function $\label{demand}p=\(\alpha-q\(r\)\)^+$, where $\alpha$ is the \emph{demand parameter} and $q\(r\):=\sum_{i=1}^nq_i\(r\)$ is the total quantity that the retailers release to the market\footnote{To simplify notation, we write $q$ or $q\(r\)$ and $q_i$ or $q_i\(r\)$ instead of the proper $q\(r\mid \alpha\)$ and $q_i\(r\mid \alpha\)$.}. Contrary to the retailers, we assume that at the point of his decision, the supplier has incomplete information about the actual market demand. \par
This supply chain can be represented as a two-stage game, in which the supplier acts in the first stage and the retailers in the second. A strategy for the supplier is a price $r\ge 0$ and a strategy for retailer $i$ is a function $q_i:\mathbb R_+\to \mathbb R_+$, which specifies the quantity that retailer $i$ will order for any possible cost $r$. Payoffs are determined via the strategy profile $\(r,\q\(r\)\)$, where $\q\(r\)=\(q_i\(r\)\)_{i=1}^n$. Given cost $r$, the profit function $\po_i\(\q\(r\)\mid r\)$ or simply $\po_i\(\q\mid r\)$, of retailer $i\in N$, is $\po_i\(\q\mid r\)= q_i\(\alpha-q\)^+-rq_i$. For a given value of $\alpha$, the supplier's profit function, $\po_s$ is $\po_s\(r\mid \alpha\)=rq\(r\)$ for $0\le r<\alpha$, where $q\(r\)$ depends on $\alpha$ via $\po_i\(\q\mid r\)$. \par
To model the supplier's uncertainty about retail demand, we assume that after the pricing decision of the supplier, but prior to the order-decisions of the retailers, a value for $\alpha$ is realized from a continuous distribution $F$, with finite mean $\ex\alpha <+\infty$ and nonnegative values, i.e. $F\(0\)=0$. Equivalently, $F$ can be thought of as the supplier's belief about the demand parameter and, hence, about the retailers' willingness-to-pay his price. We will use the notation $\F:=1-F$ for the survival function and $\alpha_L:=\sup{\{r\ge0: F\(r\)=0\}}\ge 0$, $\alpha_H:=\inf{\{r\ge0: F\(r\)=1\}}\le +\infty$ for the support of $F$ respectively.  Under these assumptions, the supplier's payoff function $\po_s$ becomes stochastic: $\po_s\(r\)=\ex \po_s\(r\mid \alpha\)$. All the above are assumed to be common knowledge among the participants in the market (the supplier and the retailers). 

\section{Existing Results}\label{existing}
We consider only subgame perfect equilibria, i.e. strategy profiles $\(r,\q\(r\)\)$ such that $\q\(r\)$ is an equilibrium in the second stage and $q_i\(r\)$ is a best response against any $r$ for all $i=1,2,\dots,n$. The equilibrium behavior of this market has been analyzed in \cite{Le17}. To proceed with the equilibrium representation, we first introduce some notation. 

\subsection{Generalized Mean Residual Life:}Let $\alpha\sim F$ be a nonnegative random variable with finite expectation $\ex \alpha <+\infty$. The \emph{mean residual life (MRL)} function $\m\(r\)$ of $\alpha$ is defined as
\[\m\(r\):=\ex\(\alpha-r \mid \alpha >r\)=\,\dfrac{1}{\F\(r\)}\int_{r}^{\infty}\F\(u\)\du, \quad\mbox{for } r< \alpha_H \] and $\m\(r\):=0$, otherwise, see, e.g., \cite{Be16}, \cite{Lax06} or \cite{Sh07}. In analogy to the \emph{generalized failure rate (GFR)} function $\g\(r\):=r\h\(r\)$, where $\h\(r\):=f\(r\)/\F\(r\)$ denotes the hazard rate of $F$ and the \emph{increasing generalized failure rate (IGFR)} unimodality condition, defined in \cite{La01} and studied in \cite{La06},\cite{Ba13}, we introduce, see \cite{Le17}, the \emph{generalized mean residual life (GMRL)} function $\e\(r\)$, defined as $\e\(r\):=\frac{\m\(r\)}r$, for $0<r<\alpha_H$. If $\e\(r\)$ is \emph{decreasing}, then $F$ has the \emph{(DGMRL) property}. The relationship between the (IGFR) and (DGMRL) classes of random variables is studied in \cite{Le17}. \par
We will use the notation DMRL for a random variable $X$ with a decreasing mean residual life function $\m\(r\)$ and IFR for a random variable $X$ with increasing failure rate $\h\(r\)$. We say that $X_1$ is smaller than $X_2$ in the \emph{mean residual life order}, denoted as $X_1 \mrl X_2$, if $\m_1\(r\)\le \m_2\(r\)$ for all $r$, see \cite{Sh07}. Of course $\m_1\(r\)\le \m_2\(r\)$ if and only if $\e_1\(r\)=\e_2\(r\)$ for all $r>0$. Similarly, $X_1$ is smaller than $X_2$ in the usual stochastic (hazard rate) order, denoted as $X_1\st X_2$ ($X_1\hr X_2$), if $\F_1\(r\)\le \F_2\(r\)$ ($\h_1\(r\)\le \h_2\(r\)$) for all $r$. The $\hr$-order implies the $\mrl$-order. However, neither of the orders $\st$ and $\mrl$ imply the other.

\subsubsection{Market equilibrium:} Using this terminology, we can express the supplier's optimal pricing strategy in terms of the MRL function and formulate sufficient conditions on the demand distribution, under which a subgame perfect equilibrium exists and is unique. 
\begin{theorem}[\cite{Le17}]\label{mainresult}
Assume that the supplier's belief about the unknown, nonnegative demand parameter, $\alpha$, is represented by a continuous distribution $F$, with support inbetween $\alpha_L$ and $\alpha_H$ with $0\le \alpha_L<\alpha_H\le\infty$. 
\begin{enumerate}[label=(\alph*)]
\item If an optimal price $r^*$ for the supplier exists, then $r^*$ satisfies the fixed point equation  
\begin{equation}\label{fixed}r^*=\m\(r^*\)\end{equation} 
\item If $F$ is strictly DGMRL and $\ex \alpha^2$ is finite, then in equilibrium, the optimal price $r^*$ of the supplier exists and is the unique solution of \eqref{fixed}. 
\end{enumerate}
\end{theorem}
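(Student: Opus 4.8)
The plan is to reduce the supplier's problem to a one–dimensional maximisation whose objective is expressed through the MRL function, and then to read the fixed–point equation off the first–order condition. First I would solve the second stage: for a realised demand $\alpha>r$ the symmetric Cournot equilibrium among the $n$ retailers gives total quantity $q\(r\)=\frac{n}{n+1}\(\alpha-r\)^+$, so that $\po_s\(r\mid\alpha\)=\frac{n}{n+1}\,r\,\(\alpha-r\)^+$. Taking expectations over $\alpha\sim F$ and using $\ex\(\alpha-r\)^+=\int_r^\infty\F\(u\)\du=\m\(r\)\F\(r\)$ yields the closed form
\[\po_s\(r\)=\frac{n}{n+1}\,r\int_r^\infty\F\(u\)\du=\frac{n}{n+1}\,r\,\m\(r\)\F\(r\).\]
Since $F$ is continuous, $\F$ is continuous and $\po_s$ is continuously differentiable on $\(0,\alpha_H\)$ with
\[\po_s'\(r\)=\frac{n}{n+1}\,\F\(r\)\(\m\(r\)-r\),\]
by the fundamental theorem of calculus. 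This identity, in which the sign of $\po_s'$ is dictated by $\m\(r\)-r$, is the engine of both parts.

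For part (a), I would argue that any optimal $r^*$ is interior: $\po_s$ vanishes at $r=0$ and for every $r\ge\alpha_H$, while it is strictly positive on the nonempty interval $\(0,\alpha_H\)$, so an optimum cannot lie on the boundary. Hence $\po_s'\(r^*\)=0$, and dividing by $\F\(r^*\)>0$ (valid because $r^*<\alpha_H$) gives $r^*=\m\(r^*\)$, which is \eqref{fixed}.

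For part (b) I would establish unimodality. Writing $\m\(r\)-r=r\(\e\(r\)-1\)$ shows that on $\(0,\alpha_H\)$ the sign of $\po_s'$ coincides with that of $\e\(r\)-1$. Strict DGMRL makes $\e$ strictly decreasing, so $\e\(r\)-1$ changes sign at most once; this gives at most one stationary point, hence uniqueness of the solution of \eqref{fixed}. For existence I would invoke the two endpoints: $\po_s\(0\)=0$, and $\po_s\(r\)\to 0$ as $r\to\infty$, the latter being exactly where finiteness of $\ex\alpha^2$ enters, since $r\int_r^\infty\F\(u\)\du\le \ex\bigl[\alpha^2\mathbf 1_{\{\alpha>r\}}\bigr]\to 0$ by dominated convergence. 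As $\po_s$ is continuous, nonnegative, not identically zero, and tends to zero at both ends of its domain, it attains a maximum at some interior $r^*$, which by part (a) solves \eqref{fixed}; strict monotonicity of $\e$ then forces $\po_s'>0$ before $r^*$ and $\po_s'<0$ after it, so $r^*$ is the unique global maximiser.

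The main obstacle is the existence half of part (b). With bounded support the endpoint $\po_s\(\alpha_H\)=0$ is immediate, but when $\alpha_H=\infty$ one must rule out the possibility that $\e\(r\)\ge 1$ for all $r$ -- equivalently that $\po_s$ is nondecreasing with its supremum never attained. This is precisely what the second–moment hypothesis prevents: it forces $\po_s\(r\)\to0$, so a nondecreasing $\po_s$ would have to vanish identically, contradicting positivity on the interior. Making this limiting estimate rigorous and aligning it with the single crossing of $\e$ through the level $1$ is the delicate step; the remaining manipulations are routine.
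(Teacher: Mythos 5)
Your proposal is correct and follows essentially the same route as the paper's source argument: the paper cites \cite{Le17} for this theorem and does not reproduce the proof, but its own use of the identity $\po_I\(r\)=r\,\ex\(\alpha-r\)^+=r\,\m\(r\)\F\(r\)$ (``by the same argument as in the proof of Theorem~1'') shows that the intended proof is exactly yours -- reduce the second stage to Cournot, write the expected profit as $\tfrac{n}{n+1}r\,\m\(r\)\F\(r\)$, read the first-order condition $\m\(r^*\)=r^*$ off the sign of $\m\(r\)-r$, use strict monotonicity of $\e$ for the single crossing, and use finiteness of $\ex\alpha^2$ to force $\po_s\(r\)\to0$ at infinity so that the maximum is attained.
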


Expressing \eqref{fixed} in terms of the GMRL function $\e\(r\)$, the supplier's optimal price $r^*$ can be equivalently written as the solution of equation $\e\(r^*\)=1$. 

\section{Comparative Statics}\label{comparative}
The closed form expression of \eqref{fixed} provides the basis for an extensive comparative statics and sensitivity analysis on the distribution parameters of market demand. To understand the market-equilibrium behavior under different demand (distribution) characteristics, we employ \eqref{fixed} and the rich theory of \emph{stochastic orders}, \cite{Sh07}, \cite{Lax06} and \cite{Be16}. Based on $\po_i$, $\po_s$ and \Cref{mainresult}, the market fundamentals at equilibrium are given in \Cref{fund}.

\begin{center}
$\begin{array}{rclll}
\hline
\multicolumn{3}{l}{\text{Notation \& Expression}} & &\text{Definition} \\\hline
r^* &= & \m\(r^*\) && \text{wholesale price}\\
q^* &=& \frac{n}{n+1}\(\alpha-r^*\)^+ && \text{total quantity sold to the market}\\
p^*&= & \alpha-q^* && \text{retail price}\\
\hline
\Pi_s^*&= & \frac{n}{n+1}\(\alpha-r^*\)^+r^* && \text{realized supplier's profit}\\
\Pi_i^*&= & \(\frac{1}{n+1}\(\alpha-r^*\)^+\)^2 && \text{$i-$th retailer's profit, $i=1,\dots,n$}\\
\hline
\end{array}$
\vspace*{0.2cm}
\captionof{table}{Market fundamentals in equilibrium.}
\label{fund}
\end{center}
Here, $\Pi_s^*$ refers to the supplier's \emph{realized} -- not expected -- profit, i.e. $\Pi_s^*:=\pi_S\(r^*\mid \alpha\)$. From \Cref{fund}, it is immediate that the total quantity $q^*$ that is sold to the market and the retail price $p^*$ are monotone in $r^*$.  Accordingly, we restrict attention on the behavior of $r^*$ as the distribution parameters vary. \par 
To obtain a meaningful comparison between different markets, we assume throughout equilibrium uniqueness. Hence, unless stated otherwise, we consider only strictly DGRML distributions with finite second moment. Since the DGMRL is particulartly inclusive, see \cite{Le17} and \cite{Ba13} and finiteness of the second moment of the demand is naturally to assume, we do not consider them as restrictive. Still, since these conditions are only sufficient and not necessary, the analysis applies to any other setting that guarantees equilibrium existence and uniqueness.

\subsection{Wholesale Price Determinants:}\label{parameters}
Although immediate from \Cref{mainresult}, the next Lemma showcases the importance of the characterization in \eqref{fixed}. Let $X_1$ and $X_2$ denote two markets (or two instances of the same market) with demand distributions $F_1,F_2$. As stated above, $X_1,X_2$ are assumed to be nonnegative, strictly DGMRL random variables with finite second moment. We then have
\begin{lemma}\label[lemma]{mrlorder} Let $X_1,X_2$ denote the demand in two different market instances. If $X_1 \mrl X_2$, then $r^*_1\le r^*_2$. 
\end{lemma}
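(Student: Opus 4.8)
The plan is to combine the fixed-point characterization of \Cref{mainresult} with the strict monotonicity of the GMRL function $\e$. By part~(b), for each market $i=1,2$ the optimal price $r^*_i$ exists, is the unique fixed point of $\m_i$, and---after rewriting $r^*_i=\m_i(r^*_i)$ in the equivalent form $\e_i(r^*_i)=1$---is the unique solution of $\e_i(r)=1$. Moreover each fixed point is strictly positive: if $r^*_i=0$, then $\m_i(0)=\ex X_i=0$, which is impossible for a nonnegative random variable with positive mean; hence $r^*_i>0$ and division by $r^*_i$ is legitimate below.

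First I would evaluate the GMRL function of the first market at the second market's optimal price. The hypothesis $X_1\mrl X_2$ means $\m_1(r)\le\m_2(r)$ for every $r$; specializing to $r=r^*_2$ and using $r^*_2=\m_2(r^*_2)$ gives
\[
\m_1(r^*_2)\le \m_2(r^*_2)=r^*_2,
\qquad\text{equivalently}\qquad
\e_1(r^*_2)=\frac{\m_1(r^*_2)}{r^*_2}\le 1 .
\]

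Second, I would compare this bound with the defining relation $\e_1(r^*_1)=1$. Because $X_1$ is strictly DGMRL, $\e_1$ is strictly decreasing on its domain, so from $\e_1(r^*_2)\le 1=\e_1(r^*_1)$ the inequality $r^*_1\le r^*_2$ follows at once: if $\e_1(r^*_2)<1$, strict monotonicity forces $r^*_1<r^*_2$, while if $\e_1(r^*_2)=1$, uniqueness of the fixed point gives $r^*_1=r^*_2$. This is exactly the claim.

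The argument is short precisely because all the analytic content is already packaged in \Cref{mainresult}; the step I expect to require the most care is bookkeeping the direction of the inequalities. The mrl order bounds $\m_1$ \emph{from above}, which turns into $\e_1(r^*_2)\le 1$, and it is the \emph{decreasing} nature of $\e_1$---not an increasing one---that then pushes $r^*_1$ below $r^*_2$. The only genuine edge case is whether $r^*_2$ lies in the region where $\e_1$ is strictly decreasing: if instead $r^*_2\ge\alpha_{H,1}$, the upper support endpoint of $X_1$, then $r^*_1<\alpha_{H,1}\le r^*_2$ holds trivially and the conclusion is immediate.
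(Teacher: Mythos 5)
Your proposal is correct and takes essentially the same approach as the paper, just in mirror image: the paper substitutes $r^*_1$ into $\m_2$ and uses the strict monotonicity of $\e_2$ to conclude $r^*_2\ge r^*_1$, while you substitute $r^*_2$ into $\m_1$ and use the strict monotonicity of $\e_1$; both arguments combine the fixed-point characterization of \Cref{mainresult} with the defining inequality of the $\mrl$-order. Your extra care about positivity of the fixed points and the edge case $r^*_2\ge \alpha_{H,1}$ is sound (the paper's orientation sidesteps that case automatically, since $\m_2(r^*_1)\ge r^*_1>0$ forces $r^*_1$ to lie inside the support of $X_2$), but the substance is identical.
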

\begin{proof} Since $X_1\mrl X_2$, we have that $\m_1\(r\)\le \m_2\(r\)$ for all $r>0$, by definition. Hence, by \eqref{fixed}, $r_1^*=\m\(r_1^*\)\le \m_2\(r_1^*\)$, which implies that $\e_2\(r^*_1\)\ge 1$. Since $\e_2\(r\)$ is strictly decreasing by assumption, this implies that $\e_2\(r\)>1$ for all $r<r^*_1$. Since $r^*_2$ is the unique solution of $\e\(r^*_2\)=1$, this in turn implies that $r^*_2\ge r^*_1$.
\end{proof}
Hence, the supplier charges a larger wholesale price in a market that is larger in the $\mrl$-order. Based on \Cref{mrlorder}, the task of studying the behavior of the wholesale price $r^*$ largerly reduces to finding sufficient conditions that imply -- or that are equivalent to -- the $\mrl$-order. Such conditions can be found in \cite{Sh07}, and are studied below.

\subsubsection{Re-estimating Demand:} 
We start with the response of the equilibrium wholesale price $r^*$ to transformations that intuitively correspond to a larger market. Let $X$ denote the random demand in an instance of the market under consideration. Let $c\ge1$ be a positive constant. Moreover, let $Z$ denote an additional source of demand that is independent of $X$. Let $r^*_{X}$ denote the equilibrium wholesale price in the initial market and $r^*_{X+Z}$ the equilibrium wholesale price in the market with random demand $X+Z$. How does $r^*_{X}$ compare to $r^*_{cX}$ and to $r^*_{X+Z}$?\par
While the intuition that the larger markets $cX$ and $X+Z$ will give rise to higher wholesale prices is largely confirmed, see \Cref{reestimate}, the results do not hold in full generality and one needs to pay attention to some technical details. For instance, since DGMRL random variables are not closed under convolution, see \cite{Le17}, the random variable $X+Z$ may not be DGMRL. This may lead to a multiplicity of equilibrium prices in the $X+Z$ instance, irrespectively of whether $X$ is DGMRL or not. To focus on the economic intepretation of the comparative static analysis and to avoid an extensive discussion on the technical conditions, we assume that $Z$ is a random variable such that the market $X+Z$ has again a unique wholesale equilibrium price. However, we consider this assumption as a restriction to the applicability of statement (ii) of \Cref{reestimate}. 

\begin{theorem}\label{reestimate} Let $X\sim F$ be a nonnegative DGMRL random variable with finite second moment which describes the demand distribution in a market instance.
\begin{enumerate}[label=(\roman*)]
\item If $c\ge 1$ is a positive constant, then $r^*_{X}\le r^*_{cX}$.
\item If $Z$ is a nonnegative random variable with finite second moment, independent of $X$ such that $X+Z$ remains strictly DGMRL, then $r^*_{X}\le r^*_{X+Z}$.
\end{enumerate}
\end{theorem}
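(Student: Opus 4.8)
The plan is to prove both parts by reducing them to \Cref{mrlorder}: it suffices to establish the corresponding mrl-orderings $X \mrl cX$ and $X \mrl X+Z$, since the Lemma then delivers the stated inequalities on the equilibrium wholesale prices. Thus the whole argument is really about showing that each ``enlargement'' of the market dominates the original in the $\mrl$-order.

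For part (i) I would start from the scaling behaviour of the MRL function. Since $\F_{cX}\(r\) = \F_X\(r/c\)$, a change of variables in the defining integral gives $\m_{cX}\(r\) = c\,\m_X\(r/c\)$, equivalently $\e_{cX}\(r\) = \e_X\(r/c\)$ in terms of the GMRL function. Because $c\ge 1$ forces $r/c \le r$ and $\e_X$ is decreasing by the DGMRL hypothesis, we obtain $\e_X\(r/c\) \ge \e_X\(r\)$, hence $\m_{cX}\(r\) = r\,\e_X\(r/c\) \ge r\,\e_X\(r\) = \m_X\(r\)$ for every $r>0$. This is exactly $X \mrl cX$, and \Cref{mrlorder} yields $r^*_X \le r^*_{cX}$. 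When $\e_X$ is strictly decreasing, solving $\e_{cX}\(r^*_{cX}\)=\e_X\(r^*_{cX}/c\)=1$ against $\e_X\(r^*_X\)=1$ even gives the sharper identity $r^*_{cX}=c\,r^*_X$, which makes the monotonicity transparent. This part is routine.

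For part (ii) the target is $X \mrl X+Z$. The easy half is the usual stochastic order: since $Z\ge 0$ is independent of $X$, conditioning on $Z$ gives $\F_{X+Z}\(r\) = \ex\,\F_X\(r-Z\) \ge \F_X\(r\)$, so $X \st X+Z$. This alone does not suffice, because $\st$ does not imply $\mrl$; indeed both the numerator $\int_r^\infty \F_{X+Z}\(u\)\du$ and the denominator $\F_{X+Z}\(r\)$ of $\m_{X+Z}\(r\)$ dominate their $X$-counterparts, so the ratio is a priori inconclusive. To reach the $\mrl$-order I would pass to the residual-life representation $\m_{X+Z}\(r\)=\int_0^\infty \F_{X+Z}\(r+t\)/\F_{X+Z}\(r\)\,\mathrm{d}t$ and compare it termwise with the analogue for $X$; equivalently, I would try to upgrade $X\st X+Z$ to the hazard-rate order $X\hr X+Z$, which by the implications recorded in \Cref{existing} does give $X\mrl X+Z$. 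Once the $\mrl$-order is in hand, \Cref{mrlorder} closes the argument; in fact it is enough to verify the single-point inequality $\m_{X+Z}\(r^*_X\) \ge r^*_X = \m_X\(r^*_X\)$, whence $\e_{X+Z}\(r^*_X\)\ge 1$ and, by strict monotonicity of $\e_{X+Z}$ (guaranteed by the strict-DGMRL assumption on $X+Z$), $r^*_{X+Z}\ge r^*_X$.

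The hard part is precisely this upgrade from $\st$ to $\mrl$ across a convolution, and I expect it to be the main obstacle. The mrl-order is not closed under convolution: taking $Z\equiv a$ degenerate gives $\m_{X+a}\(r\)=\m_X\(r-a\)$, so $X\mrl X+a$ holds iff $\m_X$ is nonincreasing, i.e. iff $X$ is DMRL -- a strictly stronger property than DGMRL. Consequently the DGMRL hypothesis on $X$ cannot by itself yield $X\mrl X+Z$, and the proof must draw on additional structure: either strengthen the assumption to IFR/DMRL (under which $X\hr X+Z$ and hence $X\mrl X+Z$), or exploit the strict-DGMRL property assumed for $X+Z$ together with the observation that only the single inequality $\m_{X+Z}\(r^*_X\)\ge \m_X\(r^*_X\)$ at the fixed point is actually needed. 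Isolating the minimal condition under which this last inequality holds is, I expect, the crux of the argument and exactly the place where the ``technical details'' flagged before the theorem have to be confronted.
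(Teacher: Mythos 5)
Your part (i) is correct and is exactly the paper's argument: the scaling identity $\m_{cX}(r)=c\,\m_X(r/c)=r\,\e_X(r/c)\ge r\,\e_X(r)=\m_X(r)$ via the DGMRL property, then \Cref{mrlorder}. Your additional identity $r^*_{cX}=c\,r^*_X$ under strictness is a valid observation the paper does not make.

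For part (ii) you stop short of a proof, but you should know that the obstacle you flag is not a weakness of your approach -- it is a hole in the paper's own proof. The paper's entire argument for (ii) is to invoke Theorem 2.A.11 of \cite{Sh07}, which is stated for DMRL random variables, and to assert that ``the proof extends in a straightforward way to DGMRL random variables.'' Your degenerate-$Z$ computation is precisely the reason this assertion fails: since $\m_{X+a}(r)=\m_X(r-a)$, the relation $X\mrl X+a$ forces $\m_X$ to be nonincreasing, i.e.\ DMRL, whereas DGMRL permits $\m_X$ to increase. Under the DGMRL hypothesis the ``straightforward extension'' only yields $\m_X(r-z)\ge \frac{r-z}{r}\m_X(r)$, which is too weak, so the paper has not proved the $\mrl$-comparison it uses.

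In fact, your observation can be pushed to a counterexample satisfying \emph{all} the stated hypotheses, including strict DGMRL of $X+Z$, and it defeats even the single-point inequality $\m_{X+Z}(r^*_X)\ge r^*_X$ that you correctly isolate as sufficient. Let $\m_X(t)=1.71-t$ for $t\le 0.81$ and $\m_X(t)=\sqrt{t}$ for $t\ge 0.81$. This is a legitimate MRL function (it is positive, $\m_X(t)+t$ is nondecreasing, and $\int^{\infty}\m_X(u)^{-1}\mathrm{d}u=\infty$); the associated $F$ is continuous with support $[0.81,\infty)$, satisfies $\F(t)\propto t^{-1/2}e^{-2\sqrt{t}}$ there, and has all moments finite. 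Moreover $\e_X$ is strictly decreasing, so $X$ is strictly DGMRL, and its unique fixed point is $r^*_X=1$. Now take $Z\equiv a=0.0475$, so that $\m_{X+a}(r)=\m_X(r-a)$. The function $\e_{X+a}$ is strictly decreasing as well: the ratio $\sqrt{r-a}/r$ could only increase for $r<2a=0.095$, far below the support $[0.8575,\infty)$ of $X+a$. Hence $X+Z$ is strictly DGMRL, yet its unique fixed point solves $r=\sqrt{r-0.0475}$, giving $r^*_{X+Z}=0.95<1=r^*_X$ (indeed $\sqrt{0.9025}=0.95$). So statement (ii) is false as stated; no argument can close the gap you found without strengthening the hypotheses -- e.g.\ to $X$ DMRL, under which Theorem 2.A.11 applies verbatim and gives $X\mrl X+Z$, or to $X$ IFR, which gives $X\hr X+Z$ and hence the $\mrl$-comparison, exactly the repair you propose. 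In short: your refusal to accept the ``straightforward extension'' reflects better judgment than the paper's own proof.
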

\begin{proof} The proof of (i) follows directly from the preservation property of the $\mrl$-order that is stated Theorem 2.A.11 of \cite{Sh07}. Specifically, since $c\m\(r/c\)$ is the mrl function of $cX$, we have that for all $>0$ 
\[c\m\(r/c\)=r\cdot\frac{\m\(r/c\)}{r/c}=r\cdot\e\(r/c\)\ge r\cdot\e\(r\)=\m\(r\)\] where the inequality follows from the assumption that $X$ is DGMRL. Hence, $X \mrl cX$ which by \Cref{mrlorder} implies that $r^*_X\le r^*_{cX}$. \par
Statement (ii) is more involved since $r^*_{X+Z}$ may not be unique in general. However, under the assumption that $X+Z$ remains strictly DGMRL, we may adapt Theorem 2.A.11 of \cite{Sh07} and obtain the claim in a similar fashion to part (i). Although Theorem 2.A.11 is stated for DMRL random variables, the proof extends in a straightforward way to DGMRL random variables.
\end{proof}
Another way to treat the possible multiplicity of equilibrium wholesale prices in the $X+Z$ market and the fact that $X+Z$ may not be DGMRL is the following. Since, $X$ is strictly DGMRL, we know that $r<\m_{X}\(r\)$ for all $r<r^*_{X_1}=\m_{X_1}\(r^*_{X_1}\)$. Together with $X_1\mrl X_1+Z$, this implies that for all $r<r^*_{X_1}$, the following holds: $r<\m_{X_1}\(r\)\le \m_{X_1+Z}\(r\)$, and hence that $r^*_{X_1+Z}\ge r^*_{X_1}$ for any $r^*_{X+Z}$ such that \eqref{fixed} holds. Hence, in this case, we can compare the $r^*_{X}$ with every $r^*_{X+Z}$ separately and obtain that $r^*_{X}$ is less than any possible equilibrium wholesale price in the market $r^*_{X+Z}$. However, as mentioned above, we prefer to restrict attention to markets that preserve equilibrium uniqueness.

\subsubsection{Closure Properties:}
Next, we turn our attention to operations that preserve the $\mrl$-order. Let $X_1,X_2$ denote two different instances of the market, i.e., two different demand distributions or beliefs about it, such that $X_1\mrl X_2$. In this case, we know that $r^*_1\le r^*_2$. We are interested in determining transformations of $X_1, X_2$ that preserve the $\mrl$-order and hence, by \Cref{mrlorder}, the ordering $r^*_1\le r^*_2$. Again, to avoid technicalities, we assume that $X_1, X_2$ are such that \Cref{mainresult} applies, i.e., that they are nonnegative, strictly DGMRL and have finite second moment. 

\begin{theorem}\label{closure} Let $X_1\sim F_1,X_2\sim F_2$ denote the demand in two different market instances, such that $X_1\mrl X_2$. Then,
\begin{enumerate}[label=(\roman*)]
\item If $\phi$ is an increasing convex function, then $r^*_{\phi\(X_1\)}\le r^*_{\phi\(X_2\)}$. 
\item If $Z$ is a nonnegative, IFR random variable with finite second moment, independent of $X_1,X_2$ such that $X_1+Z$ and $X_2+Z$ remain strictly DGMRL, then $r^*_{X_1+Z}\le r^*_{X_2+Z}$.
\item If $X_p\sim F_1+\(1-p\)F_2$ is strictly DGMRL for some $p\in \(0,1\)$, then $r^*_{X_1}\le r^*_{X_p}\le r^*_{X_2}$.
\end{enumerate}
\end{theorem}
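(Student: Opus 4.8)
The plan is to treat all three parts through a single reduction: to prove $r^*_{A}\le r^*_{B}$ it suffices to establish the $\mrl$-ordering $A\mrl B$ of the transformed demands and then quote \Cref{mrlorder}. Thus the theorem amounts to three closure properties of the $\mrl$-order, and throughout I use the standing assumption that the transformed variables are again strictly DGMRL with finite second moment, so that \Cref{mainresult} applies and the equilibrium prices being compared are well defined and unique.

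For part (i) I would invoke the closure of the $\mrl$-order under increasing convex maps: if $X_1\mrl X_2$ and $\phi$ is increasing and convex, then $\phi(X_1)\mrl\phi(X_2)$. This is a standard property of the mean residual life order (see \cite{Sh07}); the only thing to verify is that our hypotheses match its statement, after which \Cref{mrlorder} gives $r^*_{\phi(X_1)}\le r^*_{\phi(X_2)}$ at once.

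Part (iii) I would instead settle by direct computation. Writing $\F_p=p\F_1+(1-p)\F_2$ for the survival function of the mixture $X_p$ and using the identity $\int_r^\infty\F_i(u)\,\mathrm{d}u=\m_i(r)\F_i(r)$, the mean residual life of the mixture becomes
\[
\m_p(r)=\frac{p\,\m_1(r)\F_1(r)+(1-p)\,\m_2(r)\F_2(r)}{p\,\F_1(r)+(1-p)\,\F_2(r)}\,,
\]
which is, for every fixed $r$, a convex combination of $\m_1(r)$ and $\m_2(r)$. Since $X_1\mrl X_2$ means $\m_1(r)\le\m_2(r)$ pointwise, this convex combination is sandwiched, $\m_1(r)\le\m_p(r)\le\m_2(r)$, i.e. $X_1\mrl X_p\mrl X_2$; two applications of \Cref{mrlorder} then deliver $r^*_{X_1}\le r^*_{X_p}\le r^*_{X_2}$.

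The hard part is (ii), because the $\mrl$-order is \emph{not} closed under convolution in general, and the IFR hypothesis on $Z$ is exactly what repairs this. The plan is to show that $X_1\mrl X_2$ together with $Z$ IFR and independent of $X_1,X_2$ implies $X_1+Z\mrl X_2+Z$, and then to apply \Cref{mrlorder} a final time. I would obtain this from the convolution-closure property of the $\mrl$-order under an IFR summand in \cite{Sh07}. The delicate points are to confirm that the IFR condition on $Z$ -- log-concavity of its survival function $\G$ -- is precisely what makes the tail comparison survive the convolution, and, as already flagged before the theorem, that $X_1+Z$ and $X_2+Z$ remain strictly DGMRL, so that \Cref{mrlorder} is applicable and both equilibrium prices are unique; dropping the IFR assumption can destroy the preservation, which is why it cannot simply be omitted.
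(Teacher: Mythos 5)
Your proposal is correct and its overall architecture is exactly the paper's: reduce each part to a closure property of the $\mrl$-order and then invoke \Cref{mrlorder}, with the standing assumption that the transformed variables stay strictly DGMRL (which both you and the paper flag as what guarantees the compared prices are well defined and unique). Parts (i) and (ii) coincide with the paper's proof, which simply cites the corresponding preservation results in \cite{Sh07} (Theorem 2.A.19 for increasing convex transforms, Lemma 2.A.8 for convolution with an independent IFR summand). The only genuine divergence is part (iii): the paper again just cites \cite{Sh07} (Theorem 2.A.18), whereas you prove the mixture claim from scratch. Your computation is valid: writing $\F_p = p\F_1+(1-p)\F_2$ and using $\int_r^\infty \F_i(u)\du = \m_i(r)\F_i(r)$, the quantity $\m_p(r)$ is, for each fixed $r$, a convex combination of $\m_1(r)$ and $\m_2(r)$ with weights proportional to $p\F_1(r)$ and $(1-p)\F_2(r)$, so $X_1\mrl X_2$ pointwise sandwiches $\m_p$ and yields $X_1\mrl X_p\mrl X_2$. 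What your route buys is a self-contained, elementary verification that also makes transparent why no extra aging assumption on $X_1,X_2$ is needed for the mixture ordering itself (only the strict DGMRL hypothesis on $X_p$ matters, for uniqueness of $r^*_{X_p}$); what the paper's route buys is brevity and uniformity, treating all three parts as instances of known results.
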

\begin{proof} Statements (i) through (iii) follow directly from Theorems 2.A.19, Lemma 2.A.8 and Theorem 2.A.18 respectively. The assumption that the transformed random variables remain strictly DGMRL ensures equilibrium uniqueness.
\end{proof}
If instead of $X_1\mrl X_2$, $X_1$ and $X_2$ are ordered in the weaker $\hr$-order, i.e., if $X_1\hr X_2$ and $Z$ is DMRL (instead of merely IFR), then Lemma 2.A.10 of \cite{Sh07} implies that statement (ii) of \Cref{closure} remains true. Formally,
\begin{corollary}Let $X_1\sim F_1,X_2\sim F_2$ denote the demand in two different market instances, such that $X_1\hr X_2$. If $Z$ is a nonnegative, IFR random variable with finite second moment, independent of $X_1,X_2$ such that $X_1+Z$ and $X_2+Z$ remain strictly DGMRL, then $r^*_{X_1+Z}\le r^*_{X_2+Z}$.
\end{corollary}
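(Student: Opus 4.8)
The plan is to follow the same two-step template as in \Cref{closure}: first promote the hypothesis on $X_1,X_2$ into a $\mrl$-comparison of the perturbed demands $X_1+Z$ and $X_2+Z$, and then read off the conclusion from \Cref{mrlorder}. Under the standing assumption that both sums are strictly DGMRL, the prices $r^*_{X_1+Z}$ and $r^*_{X_2+Z}$ are the unique fixed points of the respective MRL functions, so \Cref{mrlorder} converts any ordering $X_1+Z \mrl X_2+Z$ directly into $r^*_{X_1+Z}\le r^*_{X_2+Z}$. Thus the entire argument reduces to showing that the $\hr$-premise survives convolution with $Z$, at least at the level of the $\mrl$-order.

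For this closure step I would invoke Lemma 2.A.10 of \cite{Sh07}, which is tailored precisely to the passage from the hazard-rate order to the mean-residual-life order under independent convolution: if $X_1\hr X_2$ and the independent summand $Z$ is DMRL, then $X_1+Z\mrl X_2+Z$. This is the natural companion to the route of \Cref{closure}(ii), where one starts only from $X_1\mrl X_2$ and pays for the weaker comparison with the stronger requirement that $Z$ be IFR; here the stronger comparison $X_1\hr X_2$ (recall from \Cref{existing} that the $\hr$-order implies the $\mrl$-order) lets $Z$ be merely DMRL. In particular, since every IFR random variable is DMRL, the displayed hypothesis on $Z$ is covered, and the same conclusion can alternatively be reached through the chain $X_1\hr X_2 \Rightarrow X_1\mrl X_2$ followed directly by \Cref{closure}(ii).

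I expect the only genuinely delicate point to be the one already stressed throughout \Cref{comparative}: because the DGMRL property is not preserved under convolution, neither $X_1+Z$ nor $X_2+Z$ need lie in the class for which \Cref{mainresult} guarantees a unique equilibrium price. This is exactly what the explicit hypothesis ``$X_1+Z$ and $X_2+Z$ remain strictly DGMRL'' secures, and it is indispensable, since otherwise $r^*_{X_1+Z}$ and $r^*_{X_2+Z}$ might fail to be well defined and \Cref{mrlorder} could not be applied. Granting this, the convolution step and \Cref{mrlorder} combine to give the claim, with no further estimation needed.
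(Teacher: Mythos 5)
Your proposal matches the paper's own argument: the paper likewise deduces the claim from Lemma 2.A.10 of \cite{Sh07} (giving $X_1+Z\mrl X_2+Z$ from $X_1\hr X_2$ and $Z$ DMRL, a class containing IFR), combined with \Cref{mrlorder} under the strict DGMRL hypothesis that guarantees uniqueness of the fixed points. Your added remark that the result also follows from the chain $X_1\hr X_2\Rightarrow X_1\mrl X_2$ plus \Cref{closure}(ii) is correct as well, and in fact you state the relative strength of the hypotheses (hr stronger than mrl, IFR stronger than DMRL) more accurately than the paper's surrounding prose does.
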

Following the exposition of \cite{Sh07}, the above collection of statements can be extended to incorporate more case-specific results. \par
Although \Cref{reestimate,closure} are immediate \emph{once} \Cref{mainresult} and \Cref{mrlorder} have been established, their implications in terms of the economic intuitions are non-trivial. In particular, both Theorems imply that if the supplier reestimates upwards her expectations about the demand then she will charge a higher price. However, this intuitive conclusion depends on the conditions that imply the $\mrl$-order and does not hold in general, as discussed in \Cref{larger} below. 

\subsubsection{Market Demand Variability:}
The response of the equilibrium wholesale price to increasing (decreasing) demand variability is less straightforward. There exist several notions of stochastic orders that compare random variables in terms of their variability and depending on which we employ, we may derive different results. First, we introduce some notation. \\[0.3cm]
Variability or Dispersive Orders: Let $X_1\sim F_1$ and $X_2\sim F_2$ be two nonnegative random variables with equal means, $\ex X_1=\ex X_2$, and finite second moments. If $\int_{r}^{+\infty}\F_1\(u\)\du\le \int_{r}^{+\infty}\F_2\(u\)\du$ for all $r\ge0$, then $X_1$ is said to be smaller than $X_2$ in the \emph{convex order}, denoted by $X_1\cx X_2$. If $F_1^{-1}$ and $F_2^{-1}$ denote the right continuous inverses of $F_1,F_2$ and $F_1^{-1}\(r\)-F_1^{-1}\(s\)\le F_2^{-1}\(r\)-F_2^{-1}\(s\)$ for all $0<r\le s<1$, then $X_1$ is said to be smaller than $X_2$ in the \emph{dispersive order}, denoted by $X_1\disp X_2$. Finally, if $\int_{F_1^{-1}\(p\)}^{\infty} \F_1\(u\)\du \le \int_{F_2^{-1}\(p\)}^{\infty} \F_2\(u\)\du$ for all $p\in \(0,1\)$, then $X_1$ is said to be smaller than $X_2$ in the \emph{excess wealth order}, denoted by $X_1\ew Y$. \cite{Sh07} show that $X\disp Y \implies X\ew Y \implies X\cx Y$ which in turn implies that $\var\(X\)\le \var\(Y\)$. Further insights and motivation about these orders are provided in Chapter 3 of \cite{Sh07}.\\[0.3cm]
Less Variability implies Lower Wholesale Price: Under our assumptions the $\mrl$-order is not implied by the $\cx$-order. Hence, the $\cx$-order is not enough to conclude that wholesale prices are ordered according to the respective market variability, i.e., that less (more) variability gives rise to lower (higher) wholesale prices. However, if we restrict attention to the $\ew$ and $\disp$ orders, then more can be said. Recall that $\alpha_L$ denotes the left end of the support of a variable $X$. Accordingly, we will write $\alpha_{Li}$ to denote the left end of the support of variable $X_i$ for $i=1,2$. 

\begin{theorem}\label{ewealth}
Let $X_1\sim F_1, X_2\sim F_2$ be two nonnegative, DGMRL random variables with $\alpha_{L1}\le \alpha_{L2}$ which denote the demand in two different market instances. If either $X_1$, $X_2$ or both are DMRL and $X_1\ew X_2$, then $r^*_1\le r^*_2$.
\end{theorem}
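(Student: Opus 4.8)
The plan is to reduce Theorem~\ref{ewealth} to \Cref{mrlorder} by showing that, under the stated hypotheses, the excess wealth order $X_1\ew X_2$ together with one of the variables being DMRL implies the $\mrl$-order $X_1\mrl X_2$. Once this implication is in hand, the conclusion $r^*_1\le r^*_2$ is immediate from \Cref{mrlorder}. So the entire content of the proof lies in establishing $X_1\ew X_2 \implies X_1\mrl X_2$ under the auxiliary assumption $\alpha_{L1}\le\alpha_{L2}$ and the DMRL condition.

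First I would write out both orders in terms of the integrated survival function. Recall that $\m_i(r)\F_i(r)=\int_r^\infty \F_i(u)\,\mathrm{d}u$, so the $\mrl$-order $\m_1(r)\le\m_2(r)$ is equivalent to $\frac{1}{\F_1(r)}\int_r^\infty\F_1(u)\,\mathrm{d}u\le\frac{1}{\F_2(r)}\int_r^\infty\F_2(u)\,\mathrm{d}u$, i.e.\ a comparison of integrated tails \emph{normalized} at the same point $r$. By contrast, the excess wealth order compares the integrated tails $\int_{F_i^{-1}(p)}^\infty\F_i(u)\,\mathrm{d}u$ at quantile-matched points $F_1^{-1}(p)$ and $F_2^{-1}(p)$. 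The key technical task is therefore to convert a comparison made at matched quantiles into a comparison made at a common argument $r$, and this is exactly where the DMRL hypothesis and the support condition must do their work.

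The main obstacle is precisely this quantile-versus-argument mismatch. Set $p=F_1(r)$ so that $F_1^{-1}(p)=r$; the excess wealth inequality at this $p$ gives $\int_r^\infty\F_1(u)\,\mathrm{d}u\le\int_{F_2^{-1}(F_1(r))}^\infty\F_2(u)\,\mathrm{d}u$, i.e.\ $\m_1(r)\F_1(r)\le\m_2\bigl(F_2^{-1}(F_1(r))\bigr)\F_2\bigl(F_2^{-1}(F_1(r))\bigr)$. Since $F_1(r)=F_2\bigl(F_2^{-1}(F_1(r))\bigr)$, the survival factors coincide and this simplifies to $\m_1(r)\le\m_2\bigl(F_2^{-1}(F_1(r))\bigr)$. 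The plan is then to show $F_2^{-1}(F_1(r))\le r$, because the DMRL property of $X_2$ (monotone decreasing $\m_2$) would then yield $\m_2\bigl(F_2^{-1}(F_1(r))\bigr)\le\m_2(r)$, closing the chain to $\m_1(r)\le\m_2(r)$. The inequality $F_2^{-1}(F_1(r))\le r$ is equivalent to $F_1(r)\le F_2(r)$, i.e.\ to the stochastic order $X_1\st X_2$; I would derive this stochastic dominance as a consequence of $X_1\ew X_2$ combined with the support condition $\alpha_{L1}\le\alpha_{L2}$, which anchors the two distributions so that the excess-wealth comparison forces $\F_1\le\F_2$ (this anchoring is standard, cf.\ the relations among orders recorded from \cite{Sh07}).

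To handle the ``$X_1$, $X_2$, or both DMRL'' disjunction cleanly, I would argue the case where $X_2$ is DMRL exactly as above, and in the case where only $X_1$ is DMRL I would run the symmetric argument: set $p=F_2(r)$ so that $F_2^{-1}(p)=r$, use the excess-wealth inequality to obtain $\m_1\bigl(F_1^{-1}(F_2(r))\bigr)\le\m_2(r)$, invoke $F_1^{-1}(F_2(r))\ge r$ (again from $\F_1\le\F_2$), and apply the DMRL monotonicity of $\m_1$ to get $\m_1(r)\le\m_1\bigl(F_1^{-1}(F_2(r))\bigr)$, chaining to $\m_1(r)\le\m_2(r)$. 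Either way we reach $X_1\mrl X_2$, and \Cref{mrlorder} finishes the proof. I expect the delicate point to be the rigorous justification of $\F_1\le\F_2$ from the excess wealth order plus the support condition, and the careful treatment of the right-continuous inverses at points where the distributions are flat; the monotonicity steps themselves are routine once the DMRL hypothesis is invoked on the correct variable.
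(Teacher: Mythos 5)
Your overall plan---deduce $X_1\mrl X_2$ from the hypotheses and then invoke \Cref{mrlorder}---is exactly the structure of the paper's argument; the difference is that the paper obtains the implication ``$\ew$ $+$ support condition $+$ DMRL $\implies$ $\mrl$'' by simply citing Theorem 3.C.5 of \cite{Sh07}, whereas you attempt to prove it. Your attempt contains a direction error and, more importantly, a false bridging claim. The direction error first: from $\m_1(r)\le \m_2(F_2^{-1}(F_1(r)))$ (which you derive correctly), closing the chain with a \emph{decreasing} $\m_2$ requires $F_2^{-1}(F_1(r))\ge r$, equivalently $F_1(r)\ge F_2(r)$, equivalently $\F_1(r)\le \F_2(r)$; it does not work with $F_2^{-1}(F_1(r))\le r$ as you write, since a decreasing function evaluated at a smaller point is \emph{larger}. (The same inversion appears in your symmetric case with $X_1$ DMRL.) You do eventually say you will derive $\F_1\le \F_2$, which is indeed what the chain needs, but note that this contradicts your own ``equivalent'' formulation $F_1(r)\le F_2(r)$: the two statements are opposite orderings.

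The fatal gap is the claim that $X_1\ew X_2$ together with $\alpha_{L1}\le\alpha_{L2}$ ``forces'' $\F_1\le\F_2$, i.e. $X_1\st X_2$. This implication is false, and no proof of \Cref{ewealth} can route through it. The anchoring intuition you invoke is valid for the \emph{dispersive} order: $X_1\disp X_2$ means $\Delta(p):=F_2^{-1}(p)-F_1^{-1}(p)$ is increasing, so $\Delta(0^+)=\alpha_{L2}-\alpha_{L1}\ge 0$ propagates to $\Delta\ge0$ everywhere, which is the $\st$-order. The excess wealth order only says that the average of $\Delta$ over $[p,1]$ dominates $\Delta(p)$, which allows $\Delta$ to dip negative on an interior interval. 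Concretely, let $X_2$ be exponential with rate $1$ (hence DMRL) and define $X_1$ by $F_1^{-1}(p)=F_2^{-1}(p)-\Delta(p)$ with $\Delta(p)=0$ on $(0,\tfrac12]$, $\Delta(p)=-(p-\tfrac12)$ on $(\tfrac12,0.6]$, and $\Delta(p)=-0.1+2(p-0.6)$ on $(0.6,1)$: then $F_1^{-1}$ is continuous and increasing, both left endpoints equal $0$, and one checks $\int_p^1\Delta(q)\,\mathrm{d}q\ge(1-p)\Delta(p)$ for all $p$, so $X_1\ew X_2$ holds---yet $F_1^{-1}>F_2^{-1}$ on $(\tfrac12,0.65)$, so $X_1\st X_2$ fails. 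Thus the case $\F_1(r)>\F_2(r)$ can genuinely occur under all the hypotheses (consistent with the paper's own observation in \Cref{larger} that the $\st$- and $\mrl$-orders are incomparable), and your argument leaves that case unhandled. For the record, that case is where the real work lies, and it is handled \emph{without} DMRL: taking $u\le r$ to be the last point where $\F_1(u)=\F_2(u)$ (such a point exists precisely because $\alpha_{L1}\le\alpha_{L2}$ makes $\F_1\le 1=\F_2$ to the left of $\alpha_{L2}$), the $\ew$-inequality at the common quantile level gives $\int_u^\infty\F_1(v)\,\mathrm{d}v\le\int_u^\infty\F_2(v)\,\mathrm{d}v$; subtracting $\int_u^r$ of each side, which preserves the inequality because $\F_1\ge\F_2$ on $[u,r]$, yields $\int_r^\infty\F_1(v)\,\mathrm{d}v\le\int_r^\infty\F_2(v)\,\mathrm{d}v$, and dividing by $\F_1(r)\ge\F_2(r)$ gives $\m_1(r)\le\m_2(r)$ directly. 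The DMRL hypothesis is needed only in the complementary case $\F_1(r)\le\F_2(r)$, where your quantile-matching chain (with the inequality directions corrected) does go through.
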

\Cref{ewealth} follows directly from Theorem 3.C.5 of \cite{Sh07}. Based on its proof, the assumption that at least one of the two random variables is DMRL (and not merely DGMRL) cannot be relaxed. \cite{Be16} argue about the restricted applicability of the $\ew$-order due to the difficulty in the evaluation of incomplete integrals of quantile functions and provide useful characterizations of the $\ew$-order to remedy this problem. \par
A result of similar flavor can be obtained if we use the $\disp$ order instead. Again, the condition that both $X_1$ and $X_2$ are DGMRL does not suffice and we need to assume that at least one is IFR. 
\begin{theorem}\label{dispersive}
Let $X_1\sim F_1, X_2\sim F_2$ be two nonnegative, DGMRL random variables which denote the demand in two different market instances. If either $X_1$, $X_2$ or both are IFR and $X_1\disp X_2$, then $r^*_1\le r^*_2$.
\end{theorem}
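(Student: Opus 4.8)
The plan is to follow the template of \Cref{ewealth}: establish the ordering $X_1 \mrl X_2$ and then appeal to \Cref{mrlorder}, which immediately delivers $r^*_1 \le r^*_2$. Two elementary observations frame the argument. First, as recorded in the excerpt, the dispersive order is stronger than the excess wealth order, $X_1 \disp X_2 \implies X_1 \ew X_2$. Second, an IFR random variable is automatically DMRL. Consequently, whenever the left ends of the supports happen to be ordered as $\alpha_{L1} \le \alpha_{L2}$, the conclusion is already contained in \Cref{ewealth}. The genuine content of \Cref{dispersive} is therefore to dispense with the left-endpoint hypothesis by exploiting the additional strength of $\disp$ over $\ew$.

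The route I would take passes through the hazard-rate order. First I would invoke the quantile characterization of the dispersive order: $X_1 \disp X_2$ is equivalent to $h_1\(F_1^{-1}\(u\)\) \ge h_2\(F_2^{-1}\(u\)\)$ for every $u \in \(0,1\)$, which one obtains by differentiating the expansion map $F_2^{-1}\circ F_1$ and rewriting the resulting density inequality in hazard-rate form. The target for the $\hr$-order is instead the pointwise inequality $h_1\(x\) \ge h_2\(x\)$ at a common argument $x$. The IFR hypothesis is exactly what converts the former into the latter: if, say, $h_2$ is increasing, then fixing $x$, setting $u = F_2\(x\)$, and using monotonicity of $h_2$ transfers the matched-quantile inequality to the common argument. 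This is the Bagai--Kochar implication recorded as Theorem 3.B.20 of \cite{Sh07}, which under the IFR hypothesis upgrades $\disp$ to $\hr$; since the $\hr$-order implies the $\mrl$-order, \Cref{mrlorder} then closes the argument. This is also the reason, emphasized in the statement, that strict DGMRL alone cannot replace IFR: without a monotone hazard rate there is no mechanism to pass from matched quantiles to a common argument.

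I expect the behaviour at the left ends of the two supports to be the main obstacle. The dispersive order is translation invariant, so it carries no information about the relative position of $\alpha_{L1}$ and $\alpha_{L2}$; in particular the passage from the matched-quantile inequality to the pointwise one can genuinely break down on the interval below $\max\{\alpha_{L1},\alpha_{L2}\}$, where one hazard rate vanishes while the other need not, and there the full $\hr$-order may fail. I would therefore verify directly that the monotonicity supplied by IFR, rather than any ordering of the left endpoints, is what guarantees the $\mrl$-comparison on this initial stretch, using IFR $\implies$ DMRL together with the $\ew$-comparison inherited from $\disp$. Once this stretch is controlled, so that $\m_1\(r\) \le \m_2\(r\)$ holds for every $r$ even when $\alpha_{L2} < \alpha_{L1}$, the remainder is the routine chain $\disp \to \hr \to \mrl$ followed by \Cref{mrlorder}.
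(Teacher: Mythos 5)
Your core route coincides with the paper's own proof: the paper establishes \Cref{dispersive} by citing Theorem 3.B.20(b) of \cite{Sh07} -- the Bagai--Kochar implication that $\disp$ plus IFR yields $\hr$ -- and then using that the $\hr$-order implies the $\mrl$-order, so that \Cref{mrlorder} delivers $r^*_1\le r^*_2$. One slip in your sketch of that implication: with $u=F_2(x)$ you need monotonicity of $\h_1$ (with $u=F_1(x)$, monotonicity of $\h_2$), and in either case the transfer from matched quantiles to a common argument requires $F_2^{-1}(F_1(x))\ge x$, i.e.\ the stochastic comparison $X_1\st X_2$. This is precisely what an ordering of the left endpoints provides: $\disp$ together with $\alpha_{L1}\le\alpha_{L2}$ implies $\st$, while $\disp$ alone, being location-free, does not.

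The genuine gap is the program of your last paragraph. You propose to prove $\m_1(r)\le\m_2(r)$ for all $r$ ``even when $\alpha_{L2}<\alpha_{L1}$'', thereby dispensing with the left-endpoint hypothesis; this cannot be done, because the claim is false in that regime. Take $X_2$ nonnegative, strictly IFR and strictly DMRL (e.g.\ a Weibull with shape parameter $2$) and $X_1:=X_2+c$ with $c>0$. Shift-invariance of the dispersive order gives $X_1\disp X_2$; both variables are IFR and strictly DGMRL, so every hypothesis of \Cref{dispersive} as stated is met. Yet $\h_1(t)=\h_2(t-c)\le\h_2(t)$ for all $t$, so the $\hr$-order is reversed \emph{globally}, not merely on an initial stretch; likewise $\m_1(r)\ge\m_2(r)$ for all $r$ (indeed $\m_1(r)=\m_2(r-c)$ for $r\ge c$ with $\m_2$ decreasing), and one checks that $r^*_1>r^*_2$ strictly. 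So the failure is not a boundary effect that IFR plus the inherited $\ew$-comparison can control: the ordering of the left endpoints (explicit in \Cref{ewealth}, implicit in the common-left-endpoint setting of Bagai--Kochar and Theorem 3.B.20) is a necessary hypothesis. Your instinct that translation invariance of $\disp$ makes the left endpoints problematic is exactly right; the correct conclusion, however, is that the endpoint condition must be added to (or read into) the statement of \Cref{dispersive}, as the paper's citation tacitly does, not that it can be proved away.
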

\Cref{dispersive} follows directly from Theorem 3.B.20 (b) of \cite{Sh07} and the fact that the $\hr$-order implies the $\mrl$-order. Again, more case specific results can be drawn from the analysis of \cite{Sh07}. \par
The main insight that we get from \Cref{ewealth,dispersive} is that less (more) variability implies lower (higher) wholesale prices. This is in sharp contrast with the results of \cite{La01} and sheds light on the effects of demand uncertainty. If uncertainty affects the retailer, then the supplier charges a higher price and captures all supply chain profits as variability reduces. Contrarily, if uncertainty falls to the supplier, then the supplier charges a lower price as variability increases. In this case, the supplier captures a lower share of system profit, see also \eqref{ratio} below. \par
The above cases correspond to two extremes: in \cite{La01} uncertainty falls solely to the retailer, whereas in the present analysis uncertainty falls solely to the supplier. Based on the aggregate findings for the two cases, the following question naturally arises: is there a way to distribute demand uncertainty among supplier and retailers to mitigate its adverse effects and to evenly distribute supply chain profits among market participants? Answering this question exceeds the scope of the comparative statics analysis. However, it highlights a interesting direction for future work. \\[0.3cm]
Parametric families of distributions: To further elaborate on the effects of relative variability on the wholesale price, we may compare our analysis with the approach of \cite{La01}. Given a random variable $X$ with distribution $F$, \cite{La01} consider the random variables $X_i:=\delta_i+\lambda_iX$ with $\delta_i\ge0$ and $\lambda_i>0$ for $i=1,2$. They conclude that in this case, the wholesale price is dictated by the coefficient of variation, $CV_i=\frac{\sqrt{\var\(X_i\)}}{\ex X_i}$. Specifically, if $CV_2<CV_1$, then $r^*_1<r^*_2$, i.e., in their model, a lower CV, or equivalently a lower relative variability, implies a higher price.\par
To establish a similar comparison, we utilize the comprehensive comparison of parametric distributions in terms of stochastic orders that is provided in Section 2.9 of \cite{Be16}. For instance, consider two normal random variables $X_1\sim N\(\mu_1,\sigma_1^2\)$ and $X_2\sim N\(\mu_2,\sigma_2^2\)$. By Table 2.2 of \cite{Be16}, if $\sigma_1<\sigma_2$ and $\mu_1\le \mu_2$, then $X_1\mrl X_2$ and hence, by \Cref{mrlorder}, $r^*_1<r^*_2$. However, by choosing $\sigma_i$ and $\mu_i$ appropriately, we can achieve an arbitrary ordering of relative variability, i.e. of $CV_1$ and $CV_2$. The reason is that the conclusions from this approach are obscured by the fact that changing $\mu_i$ for $i=1,2$, does not only affect the $CV_i$'s but also the central location of the demand distributions. In this sense, the approach using dispersive orders seems more appropriate because, under the assumption that $\ex X_1=\ex X_2$, it isolates the effect of the variability of the distribution on wholesale prices via stochastic orderings.

\subsubsection{Stochastically Larger Market:}\label{larger}
It is well known that the usual $\st$-order does not imply nor is implied by the $\mrl$-order, see \cite{Sh07}. This implies that in a stochastically larger market, the supplier may still charge a lower price, which is in line with the intuition of \cite{La01} that ``size is not everything'' and that prices are driven by different forces. Such an example is provided below. Let 
\[f\(r;\omega,\kappa,\phi\):=\frac{\kappa \(\kappa^2+\omega^2\)}{\kappa^2\cos\(\omega\phi\)+\kappa^2+\kappa\omega\sin\(\omega\phi\)+\omega^2} \cdot e^{-\kappa r}\(\cos\(\omega\(r-\phi\)\)+1\)\]
for $r\ge0$, denote the densities of a parametric family of exponentially decaying sinusoids. For $\(\omega,\kappa,\phi\)=\(0,\kappa,0\)$, $f$ corresponds to the exponential distribution with parameter $\kappa$. \Cref{inverse} depicts the survival functions $\F, \G$, the log-survival ratio $\log{\(\F/\G\)}$ and the optimal wholesaleprices $r^*_F$ and $r^*_G$ for $F$ corresponding to $\(\omega,\kappa,\phi\)=\(\pi,0.8,1.2\)$ and $G$ to $\(\omega,\kappa,\phi\)=\(0,0.9,0\)$. 
\begin{figure}[ht!]
\centering
\includegraphics[width=\linewidth]{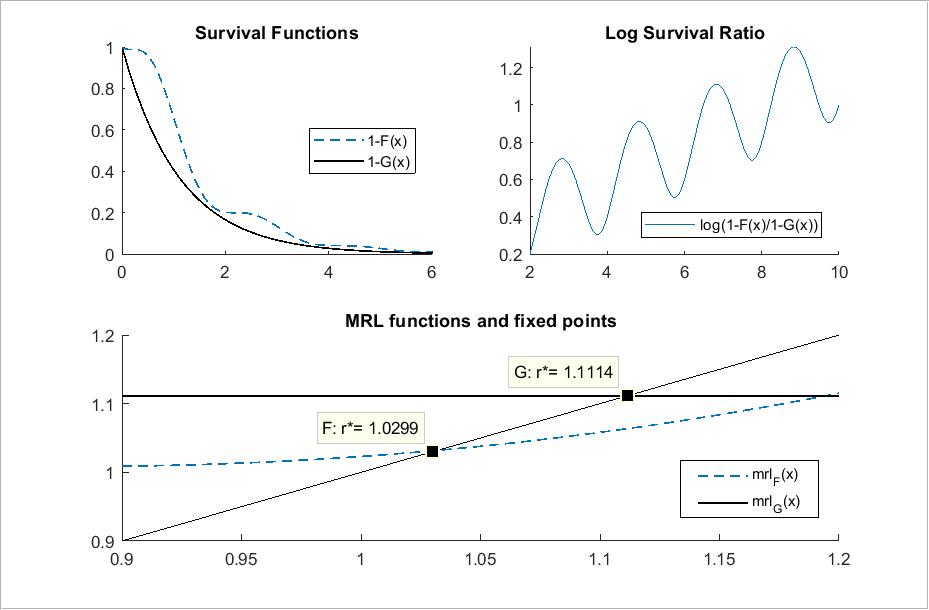}
\caption{$F$ stochastically dominates $G$, however $r^*_G>r^*_F$.}
\label{inverse}
\end{figure}
Since the log-survival ratio remains throughout positive, we infer that $G\st F$. However, as shown in the graph below $r^*_F=1.0299<r^*_G=1.1114$. Although, both functions have a unique fixed point, $F$ is not DMRL (nor DGMRL). Several simulations have not provided a conclusive answer to whether stochastic dominance implies also a larger price if we restrict to the DMRL (or DGMRL) class of random variables.

\subsection{Supply Chain Performance:}\label{performance} We measure the supply chain performance in terms of the ratio $\sum_{i=1}^n\Pi_i^*/\Pi_s^*$ which describes the division of the \emph{realized} system profit between retailers and suppliers. If $\alpha\le r^*$, then there is no transaction and the profits of all participants are equal to zero. For $\alpha>r^*$, we have that 
\begin{equation}\label{ratio}\frac{\sum_{i=1}^n \Pi^*_i}{\Pi^*_s}=\frac{n\(\frac{1}{n+1}\(\alpha-r^*\)\)^2}{\frac{n}{n+1}r^*\(\alpha-r^*\)}=\frac1{n+1}\(\frac{\alpha}{r^*}-1\)\end{equation}
Hence, the division of realized profit between supplier and retailers depends on the number $n$ of retailers and the wholesale price $r^*$. Specifically, for a given realized demand $\alpha$, as $n$ or $r^*$ increase, the supplier captures a larger share of the system profits. 

\subsubsection{Supply Chain Efficiency:}\label{efficiency} As a benchmark, we will first determine the equilibrium behavior and performance of an integrated supply chain. Let $\po_I$ denote the profit of an integrated firm. The integrated firms' decision variable is now the retail price $r$, and hence its expected profit is given by $\po_I\(r\)=r\ex\(\alpha-r\)^+=r\m\(r\)\F\(r\)$. By the same argument as in the proof of \Cref{mainresult}, $\po_I$ is maximized at $r^*=\m\(r^*\)$. In particular, the equilibrium price of both the integrated and non-integrated supplier is the same. Hence, the integrated firm's realized profit in equilibrium is equal to $\Pi_I^*\(r^*\mid \alpha\)=r^*\(\alpha-r^*\)^+$. \par
In a similar fashion to \cite{Pe07}, we define the realized \emph{Price of Anarchy (PoA)} of the system as the worst-case ratio of the realized profit of the centralized supply chain, $\Pi_I^*$, to the realized aggregate profit of the decentralized supply chain, $\Pi^*_D:=\Pi^*_s+\sum_{i=1}^n \Pi^*_i$. To retain equilibrium uniqueness, we restrict attention to the class $\mathcal G$ of nonnegative DGMRL random variables. If the realized demand $\alpha$ is less than $r^*$, then both the centralized and decentralized chain make $0$ profits. Hence, we define the PoA as: $\text{PoA}:=\sup_{F\in \mathcal G}\sup_{\alpha>r^*} \frac{\Pi^*_I}{\Pi^*_D}$. We then have
\begin{theorem}\label{poathm} The PoA of the system is given by  
\begin{equation}\label{poa}\text{PoA}=1+\mathcal O\(1/n\)\end{equation}
\end{theorem}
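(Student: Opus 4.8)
The plan is to reduce the double supremum in the definition of the PoA to an elementary one-variable optimization by first obtaining a closed form for the ratio $\Pi_I^*/\Pi_D^*$. First I would assemble the two realized profits. From \Cref{fund}, for a realized demand $\alpha>r^*$ the aggregate profit of the decentralized chain is
\[
\Pi_D^* = \Pi_s^* + \sum_{i=1}^n \Pi_i^* = \frac{n}{n+1}\,r^*\(\alpha - r^*\) + \frac{n}{\(n+1\)^2}\(\alpha - r^*\)^2,
\]
while the centralized benchmark, as established just before the statement, is $\Pi_I^* = r^*\(\alpha - r^*\)$.

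Next I would form the ratio and simplify. Factoring $\tfrac{n}{n+1}\(\alpha-r^*\)$ out of $\Pi_D^*$, collecting the bracketed term over the common denominator $n+1$, and cancelling the shared factor $\(\alpha - r^*\)$ against the numerator yields
\[
\frac{\Pi_I^*}{\Pi_D^*} = \frac{\(n+1\)^2\,r^*}{n\,\(n r^* + \alpha\)}.
\]
The key structural observation is that this expression is \emph{strictly decreasing} in $\alpha$ on $\(r^*,\infty\)$, since $\alpha$ enters only additively in the denominator. Hence the inner supremum over $\alpha>r^*$ is attained in the limit $\alpha \downarrow r^*$, giving
\[
\sup_{\alpha > r^*} \frac{\Pi_I^*}{\Pi_D^*} = \frac{\(n+1\)^2 r^*}{n\(n r^* + r^*\)} = \frac{n+1}{n}.
\]

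The decisive point, and the step I expect to carry the conceptual weight, is that this inner supremum is \emph{independent of both $r^*$ and $F$}: the factor $r^*$ cancels entirely in the limit. Consequently the outer supremum over $F \in \mathcal G$ is vacuous, and
\[
\text{PoA} = \frac{n+1}{n} = 1 + \frac{1}{n} = 1 + \mathcal O\(1/n\),
\]
which is exactly \eqref{poa}. I do not anticipate a genuine obstacle here, since the argument is essentially algebraic; the only subtleties to flag are that the supremum over $\alpha$ is approached but not attained (at $\alpha = r^*$ all profits vanish and the ratio is the indeterminate $0/0$), so the conclusion must be reached through the limit $\alpha \downarrow r^*$ rather than by evaluating at the boundary, and that the DGMRL restriction defining $\mathcal G$ enters only to guarantee that a well-defined $r^*$ exists — it imposes no further constraint on the value of the ratio.
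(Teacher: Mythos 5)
Your proposal is correct and follows essentially the same route as the paper's own proof: you derive the same closed form $\frac{\Pi_I^*}{\Pi_D^*}=\frac{\(n+1\)^2 r^*}{n\(nr^*+\alpha\)}$ (the paper writes it equivalently as $\frac{\(n+1\)^2}{n}\(n+\frac{\alpha}{r^*}\)^{-1}$), observe it is decreasing in $\alpha$, take the limit $\alpha\to r^*$ to get $\frac{n+1}{n}$, and note the result is independent of $F$ so the outer supremum is trivial. Your additional remarks --- that the supremum is approached but not attained, and that the DGMRL class only guarantees a unique $r^*$ --- are accurate refinements of the same argument.
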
 
\begin{proof} A direct substitution in the definition of PoA yields: 
\begin{equation}\label{poacalc}\text{PoA}:=\sup_{F\in \mathcal G}\sup_{\alpha>r^*}\left\{\frac{\(n+1\)^2}{n}\cdot\(n+\frac{\alpha}{r^*}\)^{-1}\right\}\end{equation}
Since $\(n+\frac{\alpha}{r^*}\)^{-1}$ decreases in the ratio $\alpha/r^*$, the inner $\sup$ is attained asymptotically for $\alpha \to r^*$. Hence, $\text{PoA}=\sup_{F\in \mathcal G}\left\{\frac{\(n+1\)^2}{n}\cdot \(n+1\)^{-1}\right\}=1+\frac1n$. \end{proof}
\Cref{poathm} implies that the supply chain becomes less efficient as the number of downstream retailers increases. Although the PoA provides a useful worst-case scenario, for a fixed $F$ and a realized demand $\alpha$, it is also of interest to study the response of the ratio $\Pi^*_I/\Pi^*_D$ to different wholesale prices. Specifically, for any given value of $\alpha$, and fixed $F$, $\Pi^*_I/\Pi^*_D$ increases as the wholesale price increases. Hence, a higher wholesale price corresponds to worst efficiency for the decentralized chain. \par
Together with the observation that with a higher wholesale price, the supplier captures a larger share of the system profits, this motivates -- from a social perspective -- the study of mechanisms that will lead to reduced wholesale prices for fixed demand levels and fixed market characteristics (number of retailers and demand distribution). Such a study falls not within the topic of the present analysis but constitutes a promising direction for future research.

\section{Conclusions}\label{conclusions}
Along with \cite{Le17}, the present study provides a probabilistic and economic analysis that aims to extend the work of \cite{La01}, \cite{La06}, \cite{Pa05} and \cite{Ba13}.\footnote{The current paper and \cite{Bel18} contain preliminary results that appear in full length in \cite{Le18,Leo20,Leo21,Leon21}.} The characterization, under mild conditions, of the supplier's optimal pricing policy as the unique fixed point of the MRL function of the demand distribution, provides a powerful tool for a multifaceted comparative statics analysis. \Cref{reestimate,closure} demonstrate how stochastic orderings, coupled with this characterization, provide predictions of the response of the wholesale price in a versatile environment of various demand transformations. Based on a numerical example, \Cref{larger} confirms \cite{La01}'s intuition that prices are driven by different forces than market size. In \Cref{performance} and \Cref{efficiency}, we show that number of second stage retailers and wholesale prices have a direct impact on supply chain performance and efficiency. A more extended version of the present study is subject of ongoing work.

\end{document}